\newtheorem{defi}{Definition}[section]
\newtheorem{ass}[defi]{Assumption}
\newtheorem{thm}[defi]{Theorem}
\DeclareMathOperator*{\argmin}{arg\,min}
\title{\LARGE \bf
Greedy Synthesis of Event- and Self-Triggered Controls\\ with Control Lyapunov-Barrier Function
}
\author{Masako Kishida
\thanks{*This work was supported by JST, PRESTO Grant Number JPMJPR22C3, Japan.}
\thanks{Masako Kishida is with the National Institute of Informatics,
        Tokyo 101-8430, Japan
        {\tt\small kishida@nii.ac.jp}}
}
\begin{document}

\maketitle
\thispagestyle{empty}
\pagestyle{empty}

\begin{abstract}
This paper addresses the co-design problem of control inputs and execution decisions for event- and self-triggered controls subject to constraints given by the control Lyapunov function and control barrier function. The proposed approach computes the control input in a way that allows for longer inter-execution intervals, which distinguishes it from many existing event- and self-triggered controllers or control Lyapunov-barrier function controllers. The proposed approach guarantees lower bounds on the minimum inter-execution times. The effectiveness of the proposed approach is demonstrated and compared with existing approaches using a numerical example.
\end{abstract}

\section{INTRODUCTION}

Control Barrier Function (CBF) \cite{WieA07} is a function used to design control inputs to satisfy safety requirements. As the use of automatic control increases in safety-critical systems, CBF is attracting much attention recently.
Applications of CBF can be found such as in robotics \cite{AgrS17} and adaptive cruise control \cite{AmeGT14}.
Furthermore, CBF has been combined with signal temporal logic \cite{LinD19}, model predictive control \cite{ZenZS21} and 
extended to assuring risk-sensitive safety \cite{SinAA23}, making it more versatile for a wider range of applications.
On the other hand, Control Lyapunov Function (CLF) \cite{KhaR88,Kha02}, an extension of the Lyapunov function, has been widely used to design stabilizing controllers for many different problems (see e.g., \cite{KrsK95,FreP96,PriND99,OgrEH01}).
After an integration of CLF and CBF was proposed in \cite{RomJ14,ZakJ16}, it has been shown that CLF and CBF can be combined to form a quadratic program (QP) for computing control inputs that ensure safety while aiming at stability in \cite{AmeXG17, AmeCM19}. Since then, the CLF-CBF QP approach has been used to solve various problems, such as autonomous surface vehicles \cite{BasTP20} and safe stabilization \cite{MesC23}.

With the increasing prevalence of networked control systems, where the communication bandwidth is shared with other tasks or batteries are used in the system elements, it has become crucial to design systems that use communication bandwidth and energy efficiently.
Event- and self-triggered control approaches are effective solutions to minimize unnecessary communication and energy consumption for such systems \cite{PosT11, HeeJT12, MazT08}.

Several approaches have been proposed for safety-critical systems using event- or self-triggered strategies.
An event-triggered control based on input-to-state safe barrier functions by using a state feedback control law $u = k(x)$ was proposed in \cite{TayOC21}. The approach is to bound the difference between the current state and the state used to compute the control input, i.e., the state at the previous execution time instance, to guarantee that the value of the barrier function monotonically decreases.
More recently,  \cite{XiaBC22} proposed an approach of event-triggered control for multi-agent systems with unknown dynamics. 
It synthesizes an event-triggered control with an adaptive affine dynamics that are updated based on the error states to estimate the real system state. A combination of self-triggered control and CLF-CBF QP was proposed in \cite{YanBT19}; however, it does not seem to guarantee the existence of the control input. 
Moreover, it possibly results in continuous control updates if the optimal control input is achieved at the boundary of the QP.

The main contribution of this paper is to introduce approaches to co-designing control inputs and execution time instances for event- and self- triggered controls, with the aim of meeting the constraints given by the control Lyapunov function and control barrier function while reducing the number of executions. This is achieved by computing the control input so as to obtain long inter-execution intervals in a greedy manner. The approach is different from many existing event- and self-triggered controllers that rely on constant feedback laws \cite{HeeJT12} or control Lyapunov-barrier function controllers that account for only the cost of control inputs. It is also shown that the proposed approach does not exhibit Zeno behavior and that the optimization parameters appear in a lower bound on the minimum inter-execution time.

The rest of the paper is organized as follows. After introducing the notation, system model, and basics of control Lyapunov-barrier function in Section \ref{sec:pre}, Section \ref{sec:et} presents the proposed event-triggered control approach. Section \ref{sec:st} discusses the extension to the self-triggered control approach. The performances of those proposed controllers are illustrated and compared with existing controllers in Section \ref{sec:ex}, which is followed by conclusions in Section \ref{sec:conc}.

\section{PRELIMINARIES} \label{sec:pre}

\subsection{Notation}
The sets of real numbers, real vectors of length $n$, and real matrices
of size $n \times m$ are denoted by $\mathbb{R}$, $\mathbb{R}^n$,  and $\mathbb{R}^{n\times m}$, respectively. 
The sets of nonnegative numbers and nonnegative integers are denoted by $\mathbb{R}_{\geq0}$ and $\mathbb{N}$, respectively.

$L_fV (x)$ denotes the Lie derivative of $V(x)$ along the vector field $f(x)$, i.e. $L_fV (x) =\frac{ \partial V (x)
}{\partial x }f(x)$. 

A continuous function $\gamma: \mathbb{R}_{\geq 0}\rightarrow \mathbb{R}_{\geq 0}$ is said to belong to class $\mathcal{K}$ if it is strictly increasing and $\gamma(0)=0$.
A continuous function $\alpha: \mathbb{R}_{\geq 0} \rightarrow \mathbb{R}_{\geq 0}$ is said to belong to class $\mathcal{K}_{\infty}$ if it belongs to class $\mathcal{K}$ and $\alpha(r) \rightarrow \infty$ as $r \rightarrow \infty$.

\subsection{System Model}
This paper deals with a nonlinear affine system in
the form of
\begin{align}\label{eq:sys}
\dot{x} = f(x)+g(x)u, \ x(0)= x_0 \in \mathcal{C} 
\end{align}
where $x \in \mathcal{D} \subset \mathbb{R}^n$ and $u \in  \mathbb{R}^m$ denote the state and the control input of the system, respectively, and $\mathcal{C} \subseteq  \mathcal{D} $ is a safe set that is defined later. 
It is assumed that $\mathcal{D}$ is bounded, and the functions $f(x)$ and $g(x)$ are Lipschitz.

\subsection{Control Lyapunov Function}
\begin{defi}[Control Lyapunov Function]\label{defi:CLF}
A positive definite function $V:\mathcal{D} \rightarrow \mathbb{R}_{\geq 0}$ is called a Control Lyapunov Function (CLF) if it satisfies
\begin{align}\label{eq:CLF}
\inf_{u\in \mathcal{U}} L_fV (x) + L_gV (x)u\leq -\gamma (V (x))
\end{align}
where $\gamma$ is a class $\mathcal{K}$ function.
\end{defi}
The existence of a CLF guarantees asymptotic stabilization of the nonlinear control system \eqref{eq:sys} with any Lipschitz
continuous feedback controller $u(x)$ that satisfies \eqref{eq:CLF} for all $x\in \mathcal{D}$ \cite{AmeGS14,AmeCM19}.

\subsection{Control Barrier Function}

Let define the safe set $\mathcal{C}$ as the superlevel set
of a continuously differentiable function $h : \mathcal{D} \subset \mathbb{R}^n \rightarrow \mathbb{R}$
\begin{align}
\mathcal{C} = \{x \in \mathcal{D}: h(x)\geq 0\}.
\end{align}

\begin{defi}[Control Barrier Function]\label{defi:CBF}
A function $h:\mathcal{D} \rightarrow \mathbb{R}$ is called a Control Barrier Function (CBF)
if it satisfies
\begin{align}\label{eq:CBF}
\sup_{u\in \mathcal{U}} L_fh (x) + L_gh (x)u\geq -\alpha(h (x))
\end{align}
for all $x \in \mathcal{D}$,
where $\alpha$ is a class $\mathcal{K}_{\infty}$ function.
\end{defi}
The existence of a CLF guarantees that the control system is safe \cite{AmeXG17}.

\subsection{CLF-CBF-Based QP}
Motivated by the results on CLF and CBF, it is of interest to obtain a controller that satisfies
\begin{align}
&L_fV (x) + L_gV (x)u\leq -\gamma (V (x)), \label{eq:CLBF_constraints_L}\\
&L_fh (x) + L_gh (x)u\geq -\alpha(h (x)) \label{eq:CLBF_constraints_S}
\end{align}
 so that it is a safe stabilizing controller.

To design such a controller, a CLF-CBF QP was constructed in \cite{AmeP13,AmeXG17}:
\begin{align} \label{eq:CLF-CBF-QP}
\begin{aligned}
 \mathbf{v} ^*(x) &= \argmin_{\mathbf{v}= [u, \delta]^\top \in\mathbb{R}^{m+1}} \frac{1}{2}\mathbf{v}^\top Q \mathbf{v} + c^\top\mathbf{v}\\
\text{s.t. }\ & \left[\begin{matrix} 
L_gV(x) & -1\\
-L_gh(x)&0
 \end{matrix}\right]
 \mathbf{v} \leq 
 \left[\begin{matrix} 
 b_{\text{clf}}(x)\\
 b_{\text{cbf}}(x)
 \end{matrix}\right],
\end{aligned}
\end{align}
where 
\begin{align}\begin{aligned}\label{eq:b}
b_{\text{clf}}(x)&=-L_fV(x)-\gamma(V(x)),\\
 b_{\text{cbf}}(x)&=L_fh(x) +\alpha(h(x)),
 \end{aligned} \end{align}
a positive definite matrix $Q\in \mathbb{R}^{m+1\times m+1}$ and $c \in  \mathbb{R}^{m+1}$ are weights and 
$\delta$ is a relaxation variable that ensures the solvability of the QP.
If $\delta$ is forced to be nonpositive, then the existence of a feasible controller will guarantee the monotonic decrease of the Lyapunov function.

\section{EVENT-TRIGGERED CONTROL} \label{sec:et}
This section proposes a greedy event-triggered control with the control Lyapunov-barrier function for the system \eqref{eq:sys}.

\subsection{Event-triggered controller structure}
The primary idea behind event-triggered control is to update the control input only when necessary to achieve a specified performance condition, thereby reducing the frequency of updates. 
In line with the standard structure of an event-triggered controller, we consider the control inputs that are maintained constant between successive event times, i.e., 
\begin{align}\label{eq:control_law}
u(t)&= u_k, \ t \in [t_k, t_{k+1}),
\end{align}
where $u_k$ is the control input computed at time $t_k$, which is the time instance when the control input is re-computed and the actuator signals are updated. The time instance $t_k$ is determined by 
\begin{align}\label{eq:update_time}
\begin{aligned}
&t_0 = 0, \\
&  t_{k+1} = \inf\{t \in\mathbb{R}:  t > t_k \text{ and } \text{ trigger condition is met} \}.
\end{aligned}
\end{align}

\subsection{Greedy control update}
Here, we introduce a greedy approach for computing the control input $u_k$ at trigger time $t_k$.
To implement it into an event-triggered control, we are interested in a control law that maximizes the inter-execution time. 
For this purpose, we seek the control input that brings the state away from the boundaries of the constraints \eqref{eq:CLBF_constraints_L}, \eqref{eq:CLBF_constraints_S}. 
This is achieved by maximizing the slack variables $\rho_1$ and $\rho_2$ in the new constraints:
\begin{align}\begin{aligned} \label{eq:CLBF_constraints_new}
L_gV (x)u+\rho_1&\leq b_{\text{clf}}(x),\\
- L_gh (x)u+\rho_2 &\leq b_{\text{cbf}}(x),
\end{aligned}\end{align}
where $b_{\text{clf}}(x)$ and $b_{\text{cbf}}(x)$ are defined in \eqref{eq:b}.

Based on the constraints \eqref{eq:CLBF_constraints_new}, we propose to modify the CLF-CBF QP in \eqref{eq:CLF-CBF-QP} as follows: 
\begin{align}\label{eq:CLF-CBF-R-QP}
\begin{aligned}
\mathbf{v}^*(x)&= \argmin_{\mathbf{v}= [u, \rho_1, \rho_2]^\top \in\mathbb{R}^{m+2}} \frac{1}{2}\mathbf{v}^\top Q\mathbf{v} + c^\top\mathbf{v}\\
\text{s.t. }\ &
\left[\begin{matrix} 
L_gV(x)& 1& 0\\
-L_gh(x)&0& 1\\
0 & 0 & -1
 \end{matrix}\right]
 \mathbf{v} 
\leq 
 \left[\begin{matrix} 
b_{\text{clf}}(x)\\
 b_{\text{cbf}}(x)\\
-\varepsilon_{\text{cbf}}
 \end{matrix}\right],
\end{aligned}
\end{align}
where a positive semidefinite matrix $Q\in \mathbb{R}^{m+2\times m+2}$ and  $c \in  \mathbb{R}^{m+2}$ are weights,
$\rho_1$ and $\rho_2$ are slack variables, and $\varepsilon_{\text{cbf}}>0$ is a constant (design parameter) that forces the states to be away from the boundary. 
This is still QP and the solvability of the QP is still ensured as long as $\rho_1$ is not constrained. 
Let $ [u^*(x), \rho_1^*(x), \rho_2^*(x)]^\top =\mathbf{v}^*(x)$.

Typically, a small norm control input $u$ is desired to minimize the control effort while large $\rho_1$ and $\rho_2$ are desired to maximize the inter-execution time. 
Hence,  a possible choice for $Q$ and $q$ is
\begin{align}
Q  = \left[\begin{matrix} 
w_1 & 0 & 0\\
0 & 0 & 0\\\
0 & 0 & 0\
 \end{matrix}\right], \ c=  \left[\begin{matrix} 
0 & -w_2& -w_3
 \end{matrix}\right]
\end{align}
where $w_1\in \mathbb{R}^{m\times m}$ is positive definite, and $w_2, w_3 \geq 0$. 

With \eqref{eq:CLF-CBF-R-QP}, the proposed controller implements the control input $u_k$ defined by
\begin{align}\label{eq:sf}
u_k = u^*(x_k) =\left[\begin{matrix} 
1 & 0 &  0
 \end{matrix}\right]^\top \mathbf{v}^*(x_k),
\end{align}
where $x_k = x(t_k)$.

\subsection{Trigger conditions}
For the states to remain in the safe region, the safety constraint \eqref{eq:CLBF_constraints_S} should be always satisfied. 
However, the satisfaction of the stability constraint \eqref{eq:CLBF_constraints_L} cannot be guaranteed together with the satisfaction of \eqref{eq:CLBF_constraints_S} in general. This is the same for the proposed controller. Yet, we do our best to minimize the time in which \eqref{eq:CLBF_constraints_L}  is violated.

Define 
\begin{align}
p(x) &=p_k(x), \ t\in[t_k, t_{k+1}),\\
q(x) &= q_k(x) ,  \ t\in[t_k, t_{k+1})
\end{align}
where 
\begin{align}
p_k(x) &= -L_fV(x) - L_gV(x)u_k,\\
q_k(x) &= L_gh(x)u_k  +b_{\text{cbf}}(x).
\end{align}
Note that the time derivative of the Lyapunov function is $\dot{V}(x)  = -p_k(x) $, thus $p_k(x) \geq 0$ is desired for the stability, while $q_k(x) \geq 0$ is desired for the safety.

We set the trigger condition in \eqref{eq:update_time} as
\begin{enumerate}
\item if $ p_k(x_k) \geq \varepsilon_{\text{clf}} $:
\begin{align}
p_k(x)=0 \text{ or } q_k(x)  = 0 \label{eq:tc1}
\end{align}
\item else:
\begin{align}
 q_k(x)  =0   \text{ or } t = t_k + \tau_{bd} \label{eq:tc2}
\end{align}
\end{enumerate}
where $\varepsilon_{\text{clf}}>0$ and $ \tau_{bd} >0$ are design parameters.

The first case is that, if the time-derivative of the Lyapunov function is sufficiently negative at the time of update $t_k$, then the next update time is when either the safety constraint \eqref{eq:CLBF_constraints_L} is satisfied with equality or the time-derivative of Lyapunov function becomes zero. This guarantees the satisfaction of both safety and stability between $t_k$ and $t_{k+1}$.

The second case is that, if the time-derivative of the Lyapunov function is close to zero or positive at the time of update $t_k$, then we compromise the controller design only focusing on the safety constraint.  The next update time is when the safety constraint \eqref{eq:CLBF_constraints_L} is satisfied with equality or small time $\tau_{bd}$ passes, whichever occurs first. This limits the duration of time during which the stability constraint is violated to $\tau_{bd}$ with the same control input.

\subsection{Lower bound on minimum inter-execution time}\label{sec:lb}
This subsection shows that the events cannot be triggered an infinite number of times in
any finite time period with the proposed controller, i.e., the proposed controller is Zeno-free under mild conditions.

Define the inter-execution time 
\begin{align}
\tau_k = t_{k+1} - t_{k}, \ k \in \mathbb{N}.
\end{align}
It is of interest to show the existence of a lower bound $\tau^*$ such that $\tau_k \geq \tau^*$ for all $k \in \mathbb{N}$.

\begin{ass} \label{ass}
We assume the followings hold on $\mathcal{D}$:
\begin{itemize}
\item $\|f(x)+g(x)u\|$ is bounded above, and
\item There exists a Lipschitz constant $L_{\text{clf}}> 0$ for $p(x)$ 
\item There exists a Lipschitz constant $L_{\text{cbf}}> 0$ for $q(x)$
\end{itemize}
\end{ass}
If the problem is considered in a finite time horizon, the first assumption is sufficient.

\begin{thm} \label{thm:lbound}
Under Assumption \ref{ass},
for the time sequence \eqref{eq:update_time} for the system \eqref{eq:sys} with the event-triggered controller \eqref{eq:control_law}, \eqref{eq:sf} with the trigger condition \eqref{eq:tc1}-\eqref{eq:tc2}, 
there exists $\tau^*>0$ such that $\tau_k \geq \tau^*$ for all $k\in \mathbb{N}$.
\end{thm}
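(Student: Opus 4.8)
The plan is to show that each of the three events that can terminate an inter-execution interval --- namely $p_k(x)=0$, $q_k(x)=0$, and $t=t_k+\tau_{bd}$ --- requires at least a fixed positive amount of time to occur, and then take $\tau^*$ to be the minimum of the resulting bounds. The third event is trivially bounded below by $\tau_{bd}$, so the work is in the first two. First I would record the uniform bound on the state velocity: by the first item of Assumption~\ref{ass}, there is $M>0$ with $\|f(x)+g(x)u_k\|\le M$ on $\mathcal{D}$, so for any $t\in[t_k,t_{k+1})$ we have $\|x(t)-x_k\|\le M(t-t_k)$. This converts a Lipschitz bound on $p$ or $q$ into a bound on how fast those quantities can change along the trajectory: $|p_k(x(t))-p_k(x_k)|\le L_{\text{clf}}\|x(t)-x_k\|\le L_{\text{clf}}M(t-t_k)$, and similarly for $q_k$ with constant $L_{\text{cbf}}$.

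Next I would handle the two trigger cases separately, matching the structure of \eqref{eq:tc1}--\eqref{eq:tc2}. In case~1, the interval begins with $p_k(x_k)\ge\varepsilon_{\text{clf}}$; for $p_k$ to reach $0$ it must decrease by at least $\varepsilon_{\text{clf}}$, which by the velocity estimate needs time at least $\varepsilon_{\text{clf}}/(L_{\text{clf}}M)$. For the $q_k(x)=0$ alternative (which appears in both cases), I would use the third constraint of the QP \eqref{eq:CLF-CBF-R-QP}: it forces $\rho_2^*(x_k)\ge\varepsilon_{\text{cbf}}$, and since $\rho_2$ enters the second constraint \eqref{eq:CLBF_constraints_new} with coefficient $1$, evaluating $q_k$ at $x_k$ gives $q_k(x_k)=b_{\text{cbf}}(x_k)-(-L_gh(x_k)u_k)\ge\rho_2^*(x_k)\ge\varepsilon_{\text{cbf}}$; wait --- more carefully, $q_k(x_k)=L_gh(x_k)u_k+b_{\text{cbf}}(x_k)$ and the constraint reads $-L_gh(x_k)u_k+\rho_2^*\le b_{\text{cbf}}(x_k)$, i.e. $q_k(x_k)\ge\rho_2^*(x_k)\ge\varepsilon_{\text{cbf}}$. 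Hence for $q_k$ to reach $0$ it must drop by at least $\varepsilon_{\text{cbf}}$, requiring time at least $\varepsilon_{\text{cbf}}/(L_{\text{cbf}}M)$.

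Combining, in case~1 the interval lasts at least $\min\{\varepsilon_{\text{clf}}/(L_{\text{clf}}M),\,\varepsilon_{\text{cbf}}/(L_{\text{cbf}}M)\}$, and in case~2 at least $\min\{\varepsilon_{\text{cbf}}/(L_{\text{cbf}}M),\,\tau_{bd}\}$, so
\begin{align}
\tau^* = \min\!\left\{\frac{\varepsilon_{\text{clf}}}{L_{\text{clf}}M},\ \frac{\varepsilon_{\text{cbf}}}{L_{\text{cbf}}M},\ \tau_{bd}\right\} > 0
\end{align}
works uniformly in $k$, giving $\tau_k\ge\tau^*$ for all $k\in\mathbb{N}$. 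The main obstacle I anticipate is the ``margin at the start of the interval'' argument for the $q_k=0$ event: one must be careful that $q_k$ is evaluated with the \emph{same} frozen input $u_k$ used on the whole interval (which is exactly the input returned by the QP at $x_k$), so that the third QP constraint genuinely certifies $q_k(x_k)\ge\varepsilon_{\text{cbf}}$; the analogous margin for $p_k$ in case~1 is given directly by the case hypothesis $p_k(x_k)\ge\varepsilon_{\text{clf}}$, and no such margin is needed in case~2 because the $\tau_{bd}$ alternative already supplies a floor. A minor technical point is to confirm that $x(t)$ remains in $\mathcal{D}$ over the interval so that Assumption~\ref{ass} applies throughout; this is immediate on a finite horizon or can be folded into the standing assumption that $\mathcal{D}$ is bounded and trajectories of interest stay in it.
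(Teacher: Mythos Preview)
Your proposal is correct and follows essentially the same approach as the paper: it uses the Lipschitz constants $L_{\text{clf}},L_{\text{cbf}}$ together with the velocity bound $M$ to translate the initial margins $p_k(x_k)\ge\varepsilon_{\text{clf}}$ (from the case hypothesis) and $q_k(x_k)\ge\rho_2^*(x_k)\ge\varepsilon_{\text{cbf}}$ (from the third QP constraint) into time lower bounds, arriving at the identical $\tau^*=\min\{\varepsilon_{\text{clf}}/(L_{\text{clf}}M),\,\varepsilon_{\text{cbf}}/(L_{\text{cbf}}M),\,\tau_{bd}\}$. Your write-up is in fact a bit more careful than the paper's in flagging that $u_k$ is frozen and that trajectories must remain in $\mathcal{D}$.
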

\begin{proof}
First, we consider if $ p_k(x_k) \geq \varepsilon_{\text{clf}}$, then how long it takes to achieve
\begin{align}\label{eq:stab_cond}
p_k(x)= 0 
\end{align}
for the first time after $t_k$. Let this time instance be $\bar{t}_{\text{clf}}$ and the corresponding state be $\bar{x}_{\text{clf}}$.

For this, we first show that  
  \begin{align} 
\|\bar{x}_{\text{clf}}-x_k\| \geq \frac{\varepsilon_{\text{clf}}}{L_{\text{clf}}}. 
\end{align}
Clearly, 
 \begin{align} 
p_k(x_k)- p_k(\bar{x}_{\text{clf}}) \geq \varepsilon_{\text{clf}},
\end{align}
and 
  \begin{align} 
L_{\text{clf}} \|x_k -\bar{x}_{\text{clf}} \| \geq p_k(x_k)- p_k(\bar{x}_{\text{clf}}).
\end{align}
Together, it follows that 
\begin{align}
\|\bar{x}_{\text{clf}}-x_k\| \geq \frac{\varepsilon_{\text{clf}}}{ L_{\text{clf}} }.
\end{align}

Next, we show that the time difference $\bar{t}_{\text{clf}}-t_k$ is lower bound away from zero. 
By the fundamental theorem of calculus, we have
\begin{align}\begin{aligned}
\|\bar{x}_{\text{clf}}-x_k\| &= \left\|\int_{t_k}^{\bar{t}_{\text{clf}}} \dot{x}(\tau)d\tau\right\| \\
&\leq \sup  \|f(x)+g(x)u \| (\bar{t}_{\text{clf}}-t_k).
 \end{aligned} \end{align}
Because $\|f(x)+g(x)u \|$ is bounded above, there exists $M>0$ such that $M= \sup  \|f(x)+g(x)u \|$, then it follows that
 \begin{align}
\bar{t}_{\text{clf}}-t_k \geq \frac{1}{M}\|\bar{x}_{\text{clf}}-x_k\|\geq  \frac{\varepsilon_{\text{clf}}}{ML_{\text{clf}}}.
 \end{align}
Thus, at least the time interval of $ \frac{\varepsilon_{\text{clf}}}{ML_{\text{clf}}}$ takes to achieve \eqref{eq:stab_cond}.

Similarly, we consider
how long it takes to achieve
\begin{align}\label{eq:safe_cond}
q_k(x)= 0 
\end{align}
for the first time after $t_k$. Let this time instance $\bar{t}_{\text{cbf}}$ and the corresponding state $\bar{x}_{\text{cbf}}$.

Using the fact that $q_k(x_k) \geq\rho_2^*(x_k)  \geq \varepsilon_{\text{cbf}}$, 
we can show that
  \begin{align} 
\|\bar{x}_{\text{cbf}}-x_k\| \geq \frac{\varepsilon_{\text{cbf}}}{L_{\text{cbf}}}
\end{align}
 and then
 \begin{align}
\bar{t}_{\text{cbf}}-t_k \geq \frac{1}{M}\|\bar{x}_{\text{cbf}}-x_k\|\geq  \frac{\varepsilon_{\text{cbf}}}{ML_{\text{cbf}}}.
 \end{align}

In summary,   the inter-execution time is lower bounded by 
$\tau^*=\min(\frac{\varepsilon_{\text{clf}}}{ML_{\text{clf}}}, \frac{\varepsilon_{\text{cbf}}}{ML_{\text{cbf}}}, \ \tau_{bd})$,
which is strictly positive.
This completes the proof.
\end{proof}

Here, we see the parameter $\varepsilon_{\text{cbf}}>0$ in \eqref{eq:CLF-CBF-R-QP} can be used as a parameter to tune the inter-execution time.
\subsection{Special cases}
\subsubsection{Feasibility is guaranteed}
Suppose that the feasibility of the constraints \eqref{eq:CLBF_constraints_L} and  \eqref{eq:CLBF_constraints_S} is guaranteed for all $x \in \mathcal{D}$ with some margins, i.e., 
there exist $s_1, s_2 > 0$ such that for all $x \in \mathcal{D}$, there exists an control input $u$ that depends on $x$ that satisfies 
\begin{align}\begin{aligned} 
L_gV (x)u+s_1&\leq b_{\text{clf}}(x),\\
- L_gh (x)u +s_2 &\leq b_{\text{cbf}}(x).
\end{aligned}\end{align}
Then, we may consider not only forcing the relaxation variable $\delta$ to be 0 in \eqref{eq:CLF-CBF-QP}, but securing certain distances from the boundaries of the constraints.
This allows us to add a constraint $\rho_1\geq0$ in \eqref{eq:CLF-CBF-R-QP} and the resulting QP is:
\begin{align}\label{eq:CLF-CBF-R-QP2}
\begin{aligned}
\mathbf{v}^*(x) &= \argmin_{\mathbf{v}= [u, \rho_1, \rho_2]^\top \in\mathbb{R}^{m+2}} \frac{1}{2}\mathbf{v}^\top Q\mathbf{v} + c^\top\mathbf{v}\\
\text{s.t. }\ &
\left[\begin{matrix} 
L_gV(x)& 1& 0\\
-L_gh(x)&0& 1\\
0 & -1 & 0\\
0 & 0 & -1
 \end{matrix}\right]
 \mathbf{v} 
\leq 
 \left[\begin{matrix} 
b_{\text{clf}}(x)\\
 b_{\text{cbf}}(x)\\
-\varepsilon_1 \\
-\varepsilon_2
 \end{matrix}\right],
\end{aligned}
\end{align}
where $ \varepsilon_1\in (0, s_1), \varepsilon_2  \in (0, s_2)$ are constants, a positive semidefinite matrix $Q\in \mathbb{R}^{m+2\times m+2}$, and $c \in  \mathbb{R}^{m+2}$ are the weights. 
In this case, the trigger condition \eqref{eq:tc1}-\eqref{eq:tc2} can be combined and modified as:
\begin{align}
(L_gV(x)u  - b_{\text{clf}} (x) )(L_gh(x)u  +b_{\text{cbf}}(x) ) = 0\label{eq:tc3}
\end{align}
to guarantee both stability and safety.
\subsubsection{Control input is not penalized}
In addition that the feasibility is guaranteed, if the control input $u$ is not penalized as in other event-triggered controls, then, the problem simplifies to solving a linear program:
\begin{align}\label{eq:CLF-CBF-R-QP2}
\begin{aligned}
u^*(x) &= \argmin_{u \in\mathbb{R}^{m}} \left(-w_2L_gV(x)-w_3L_gh(x)\right)u \\
\text{s.t. }\ &
\left[\begin{matrix} 
L_gV(x)\\
-L_gh(x)
 \end{matrix}\right]
u
\leq 
 \left[\begin{matrix} 
b_{\text{clf}}(x)-\varepsilon_1\\
 b_{\text{cbf}}(x)-\varepsilon_2
 \end{matrix}\right],
\end{aligned}
\end{align}
where, again, $ \varepsilon_1\in (0, s_1), \varepsilon_2  \in (0, s_2)$ are constants, 
using some weights $w_2, w_3 \geq 0$.

\section{Self-triggered control} \label{sec:st}
This section develops a greedy self-triggered control based on the results in Section \ref{sec:et}.

\subsection{Self-triggered controller structure}
As in Section \ref{sec:et}, let $t_k$ be the triggering time instance.
Self-triggered control computes the control input $u_k$ as well as the next time instance $t_{k+1}$ at execution time $t_k$. 
Similar to the event-triggered control \eqref{eq:control_law}, the control input $u_k$ remains constant in between $t_k$ and $t_{k+1}$. 
Unlike the event-triggered condition, however, 
no sampling or computation is required between $t_k$ and $t_{k+1}$.

As for the event-triggered control, the controller implements the control input $u_k$ in \eqref{eq:sf} by solving 
 \eqref{eq:CLF-CBF-R-QP}.

The next execution time $t_{k+1}$ is computed based on the measured state $x(t_k) = x_k$ at $t_k$:
\begin{align}\label{eq:update_time2}
\begin{aligned}
&t_0 = 0, \\
&  t_{k+1} = t_k + \Gamma(x_k),
\end{aligned}
\end{align}
where the map $\Gamma : \mathbb{R}^n \rightarrow \mathbb{R}_{\geq 0}$ determines the
triggering time $t_{k+1}$ as a function of the state $x_k$ at the time $t_k$. Thus, the inter-execution time is given by $\Gamma (x)$, i.e.,  $\tau_k =  \Gamma(x_k)$.

In the following, we present approaches to how to design the map $\Gamma(x)$.
\subsection{Computing the next execution time instance}

In Section \ref{sec:et}, the trigger condition \eqref{eq:tc1}-\eqref{eq:tc2} is designed to satisfy 
the safety constraint \eqref{eq:CLBF_constraints_S} and minimize the violation of the stability constraint \eqref{eq:CLBF_constraints_L}. Here, again, we design the map $\Gamma$ that satisfies
the safety constraint \eqref{eq:CLBF_constraints_S} all the time, while allowing the violation of the stability constraint:
\begin{align}\begin{aligned}
& \Gamma(x_k) \leq \sup\{ t >t_k:\\
& \
\text{  if } p_k(x_k) \geq  \varepsilon_{\text{clf}}: \\
& \qquad p_k(x(\tau))\geq 0 \text{ and } q_k(x(\tau))  \geq 0  \text{ for all } \tau \leq t\\
&\
\text{ else: }\\
&\qquad p_k(x(\tau))\geq 0  \text{ for all } \tau \leq t \text{ and } t \leq t_k + \tau_{bd} 
\}\\
& \ -t_k. \label{eq:gamma_bd}
\end{aligned}\end{align}

Ideally, we would like to find $t$ that achieves an equality in \eqref{eq:gamma_bd}.
However, it is difficult in general for nonlinear systems, thus, we aim at finding a lower bound by revising the approach in Section \ref{sec:lb}.
 
Again, we assume that Assumption \ref{ass} is satisfied. Then, we may use the following map:
\begin{thm} \label{thm:st_bound}
Under Assumption \ref{ass}, 
\begin{align} \label{eq:st_app}
\Gamma(x_k) = 
\begin{cases}
\min\left( \frac{\varepsilon_{\text{clf}} }{L_{\text{clf}}M_k},  \frac{\varepsilon_{\text{cbf}} }{L_{\text{cbf}} M_k}\right), \text{ if } p_k(x_k) \geq  \varepsilon_{\text{clf}}\\ 
\min\left( \frac{\varepsilon_{\text{cbf}} }{L_{\text{cbf}} M_k} ,\tau_{bd}  \right), \text{ otherwise },\\
\end{cases}
 \end{align}
 where 
 \begin{align} \label{eq:M}
 M_k = \sup_{t \in\left[t_k, t_k + \Gamma(x_k) \right)} \|f(x)+g(x)u_k\|
 \end{align}
  satisfies \eqref{eq:gamma_bd}.
\end{thm}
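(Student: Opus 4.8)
The plan is to run the estimate from the proof of Theorem \ref{thm:lbound} ``forward'' in time: instead of lower-bounding the first instant at which $p_k$ or $q_k$ vanishes, I bound from below the length of the interval over which they are guaranteed to stay nonnegative, and then verify that the closed form $\Gamma(x_k)$ in \eqref{eq:st_app} never exceeds that length, which is exactly what \eqref{eq:gamma_bd} asks.

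First I fix $k$ and hold the input at $u_k = u^*(x_k)$. On $[t_k, t_k+\Gamma(x_k))$ the state obeys $\dot x = f(x)+g(x)u_k$, so by the fundamental theorem of calculus and the definition \eqref{eq:M} of $M_k$,
\begin{align}
\|x(t)-x_k\| \le \int_{t_k}^{t}\|f(x(\tau))+g(x(\tau))u_k\|\,d\tau \le M_k\,(t-t_k).
\end{align}
Feeding this into the Lipschitz bounds of Assumption \ref{ass} gives $p_k(x(t)) \ge p_k(x_k) - L_{\text{clf}}M_k(t-t_k)$ and $q_k(x(t)) \ge q_k(x_k) - L_{\text{cbf}}M_k(t-t_k)$ on that interval. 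As in the proof of Theorem \ref{thm:lbound}, the third row of the constraint in \eqref{eq:CLF-CBF-R-QP} forces $\rho_2^*(x_k)\ge\varepsilon_{\text{cbf}}$ and the second row gives $q_k(x_k)\ge\rho_2^*(x_k)$, hence $q_k(x_k)\ge\varepsilon_{\text{cbf}}$.

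Now I split on the two cases of \eqref{eq:gamma_bd}. If $p_k(x_k)\ge\varepsilon_{\text{clf}}$, then $p_k(x(t))\ge 0$ for all $t-t_k\le\varepsilon_{\text{clf}}/(L_{\text{clf}}M_k)$ and $q_k(x(t))\ge 0$ for all $t-t_k\le\varepsilon_{\text{cbf}}/(L_{\text{cbf}}M_k)$; taking $\Gamma(x_k)$ to be the minimum of these two thresholds — the first case of \eqref{eq:st_app} — keeps both $p_k(x(\tau))\ge 0$ and $q_k(x(\tau))\ge 0$ throughout $[t_k,t_k+\Gamma(x_k))$, matching the first requirement of \eqref{eq:gamma_bd}. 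Otherwise, the same estimate gives $q_k(x(\tau))\ge 0$ for all $\tau-t_k\le\varepsilon_{\text{cbf}}/(L_{\text{cbf}}M_k)$, and choosing $\Gamma(x_k)=\min(\varepsilon_{\text{cbf}}/(L_{\text{cbf}}M_k),\tau_{bd})$ additionally caps the interval length at $\tau_{bd}$, matching the ``else'' requirement of \eqref{eq:gamma_bd} (the safety condition $q_k\ge 0$, in line with the event-triggered rule \eqref{eq:tc2}). In both cases $\Gamma(x_k)$ is no larger than the supremum defining \eqref{eq:gamma_bd}, which is the assertion.

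The one delicate point, which I expect to be the main obstacle, is that $M_k$ in \eqref{eq:M} is a supremum over an interval of length $\Gamma(x_k)$, while $\Gamma(x_k)$ is defined through $M_k$ — so \eqref{eq:st_app} is an implicit relation whose well-posedness must be argued. I would resolve this via the first bullet of Assumption \ref{ass}: since $\|f(x)+g(x)u_k\|$ admits an a priori bound $M>0$ on $\mathcal{D}$, every admissible $M_k\le M$ forces $\Gamma(x_k)\ge\min(\varepsilon_{\text{clf}}/(L_{\text{clf}}M),\varepsilon_{\text{cbf}}/(L_{\text{cbf}}M),\tau_{bd})>0$, so a positive solution exists (as a fixed point, or simply by replacing $M_k$ with $M$ at the price of conservatism), the trajectory cannot leave $\mathcal{D}$ over so short an interval, and the displayed estimates remain valid. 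Everything else is the Lipschitz-plus-bounded-velocity computation already carried out in Theorem \ref{thm:lbound}.
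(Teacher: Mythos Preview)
Your proposal is correct and follows essentially the same route as the paper: bound $\|x(t)-x_k\|$ by $M_k(t-t_k)$ via the fundamental theorem of calculus, combine with the Lipschitz constants $L_{\text{clf}},L_{\text{cbf}}$ and the initial margins $p_k(x_k)\ge\varepsilon_{\text{clf}}$, $q_k(x_k)\ge\rho_2^*(x_k)\ge\varepsilon_{\text{cbf}}$ from the QP, and then split on the two cases of \eqref{eq:gamma_bd}. Your extra paragraph on the implicit definition of $M_k$ is a welcome addition; the paper does not address well-posedness inside the proof but flags it as an implementation issue in the paragraph immediately after.
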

\begin{proof}
 Here, again note that after solving \eqref{eq:CLF-CBF-R-QP}, it is guaranteed that 
  \begin{align}
   q_k(x_k) \geq \rho_2^*(x_k) \geq \varepsilon_{\text{cbf}}.
  \end{align}  
  
For the first case of $p_k(x_k) \geq \varepsilon_{\text{clf}}$, we show
\begin{align}
\Gamma(x_k) =\min\left( \frac{\varepsilon_{\text{clf}} }{L_{\text{clf}}M_k},  \frac{\varepsilon_{\text{cbf}} }{L_{\text{cbf}} M_k} \right), \label{eq:st_app1}
 \end{align}
  satisfies \eqref{eq:gamma_bd}.
By the fundamental theorem of calculus, the equation \eqref{eq:st_app1} implies that for $ t \in[t_k, t_k +\Gamma(x_k) )$,
\begin{align}\begin{aligned}
\|x -x_k\|& \leq \int_{t_k}^{t} \|f(x) +g(x)u_k\| dt \\
&\leq  (t-t_k ) M_k \\
&\leq  (t-t_k ) \frac{ \min\left( \frac{\varepsilon_{\text{clf}} }{L_{\text{clf}}},  \frac{\varepsilon_{\text{cbf}} }{L_{\text{cbf}}} \right)}{\Gamma(x_k) }\\
&\leq   \min\left( \frac{\varepsilon_{\text{clf}} }{L_{\text{clf}}},  \frac{\varepsilon_{\text{cbf}} }{L_{\text{cbf}}} \right)\\
\Leftrightarrow \
&\begin{cases}
\varepsilon_{\text{clf}}  -L_{\text{clf}}\|x -x_k\|\geq 0,  \\ 
\varepsilon_{\text{cbf}}  -L_{\text{cbf}}\|x -x_k\|\geq 0.
\end{cases}
 \end{aligned} \end{align}
On the other hand, 
\begin{align}\begin{aligned}
\varepsilon_{\text{clf}}- p_k(x) \leq p_k(x_k) - p_k(x) \leq L_{\text{clf}} \|x -x_k\|, \\
\varepsilon_{\text{cbf}}- q_k(x) \leq  q_k(x_k) - q_k(x) \leq  L_{\text{cbf}} \|x -x_k\|.
  \end{aligned}  \end{align}
Hence, it follows that
  \begin{align}
 p_k(x) \geq 0, \  q_k(x) \geq 0.
  \end{align}
  
The second case is clear from the above discussions.
Together, it completes the proof.
\end{proof}

The difference between this map $\Sigma$ and the lower bound $\tau^*$ in the event-triggered control is only the upper bound on the norm of $f(x)+g(x)u$. Although a uniform $M$ may be used to design a constant $\Gamma=\tau^*$, such a law will shorten the inter-execution time because $M$ is likely to be much larger than $M_k$. 
However, one difficulty of implementing this approach is actually in the computation of $M_k$ in \eqref{eq:M} where $M_k$ appears in both sides of the equation. To compute exact $M_k$, it is required to compute the evolution of \eqref{eq:sys} starting at $t_k$ by gradually increasing the time duration. 
In the actual implementation, this might not be a big problem because implementation is done digitally, which we discuss next.
\subsection{Digital implementation}
Similarly to \cite{HeeJT12}, we now consider the following discrete-time versions of $p_k(x)$ and $q_k(x)$ based on a sampling time $\Delta >0$, which are defined by
  \begin{align}  \begin{aligned}
  \bar{p}_k(n) &=  p_k(x(t_k+n\Delta)),\\
\bar{q}_k(n) &=  q_k(x(t_k+n\Delta)).
  \end{aligned}  \end{align}

Let $\tau_{\min}$ and $\tau_{\max}$ be design parameters and let $N_{\min} = \lfloor \tau_{\min}/\Delta\rfloor$, $N_{\max} = \lfloor \tau_{\max}/\Delta\rfloor$. 

Then, by choosing $\tau_{bd} = \Delta$, the map can be simplified as
\begin{align}
\Gamma(x_k) =  \max \{ \tau_{\min}, n(x_k)\Delta\}, \label{eq:st_app3}
 \end{align}
 where 
 \begin{align} \label{eq:M} \begin{aligned}
n(x) = \max_{n \in \mathbb{N} } \{ n \leq N_{\max} : \bar{p}_k(m)& \geq 0 \text{ and } \bar{q}_k(m) \geq 0, \\
&\qquad m = 1, \cdots, n\}. 
 \end{aligned} \end{align}
This satisfies 
  \begin{align}
\bar{p}_k(n) \geq 0 \text{ and } \bar{q}_k(n) \geq 0, \ \forall n \in \left[0, \left\lceil \frac{t_{k+1}-t_k}{\Delta} \right\rceil\right)
  \end{align}
and $n(x) \in [N_{\min}, N_{\max} ]$.

With a smaller choice of $\Delta$, the stability constraint is more strictly forced, i.e., while the stability constraint is violated, the control input is updated every sampling time. Such a choice of $\Delta$ may result in an increase in the number of execution. Of course, it is possible to use different values for $\tau_{bd}$ and $\Delta$ with a slight modification. 

We may simply choose  $\tau_{\min}=0$ and a sufficiently large $\tau_{\max}$.
However,   the value of $\tau_{\max}$ enforces the robustness of the implementation and limits the computational complexity \cite{HeeJT12}.

\section{NUMERICAL EXAMPLE} \label{sec:ex}
This section demonstrates the effectiveness of the proposed approach using an example of a double integrator.

Let $x =  \left[\begin{matrix}
 x_1&x_2
 \end{matrix}\right]^\top$, where $x_1$ is the position and $x_2$ is the velocity. 
 The dynamics of a double integrator is given as follows:
 \begin{align} \label{eq:example} 
 \dot{x} &= \left[\begin{matrix}
 0 & 	1 \\ 0 &0
 \end{matrix}\right] x+ 
 \left[\begin{matrix}
 0\\ 1
 \end{matrix}\right]u.
 \end{align}

The control Lyapunov and control barrier functions are selected as
 \begin{align} \begin{aligned}
 V(x) &= x_1^2+x_1x_2+x_2^2, \\
 h(x) &=  (x_1-0.5)^2+(x_2+0.5)^2-0.3^2.
\end{aligned}\end{align}

Moreover, the $\alpha$ an $\gamma$ functions are chosen to be identity maps and 
the parameters $\varepsilon_{\text{clf}}=\varepsilon_{\text{cbf}}=0.1$, $\tau_{bd}= 0.5$, $\Delta =0.2  $, $\tau_{\min} =0$ and $\tau_{\max}=4$ are selected.
The weights for QP in \eqref{eq:CLF-CBF-R-QP} are selected as
\begin{align}
 Q = \left[\begin{matrix}
1 & 0	&0\\ 0 &0&0\\ 0 &0&0
 \end{matrix}\right], \ c = \left[\begin{matrix}
0& -1	&0 \end{matrix}\right]^\top.
\end{align}

Here, the performances of the following five controllers are compared for the duration of time 15, starting at $x_0 = [1, 1]$.
\begin{itemize}
\item Greedy ET: the controller that solves \eqref{eq:CLF-CBF-R-QP} and implements \eqref{eq:sf} when the trigger condition \eqref{eq:tc1} or \eqref{eq:tc2} is met
\item Greedy ST: the controller that solves \eqref{eq:CLF-CBF-R-QP} and implements \eqref{eq:sf} at time instances determined by \eqref{eq:update_time2} and  \eqref{eq:st_app3}
\item Greedy: the controller that solves \eqref{eq:CLF-CBF-R-QP} and implements \eqref{eq:sf} continuously
\item CLF-CBF QP: CLF-CBF QP controller in \cite{AmeCM19} with $H(x)=2$, $p=1$
\item SF: a standard state-feedback controller with the controller gain $K= \left[-0.5, -1\right]$, which corresponds to the case of weight for the states is $\left[\begin{matrix}0.25 & -0.5\\ -0.5 & 0\end{matrix}\right]$ and weight for the control input is $1$, thus $P= \left[\begin{matrix}1 & 0.5\\ 0.5 & 1\end{matrix}\right]$ in the algebraic Riccati equation. No constraints are considered.
\end{itemize}

\begin{figure}[tbh]\centering
 \includegraphics[width=.98\linewidth, viewport =20 0 520 260, clip]{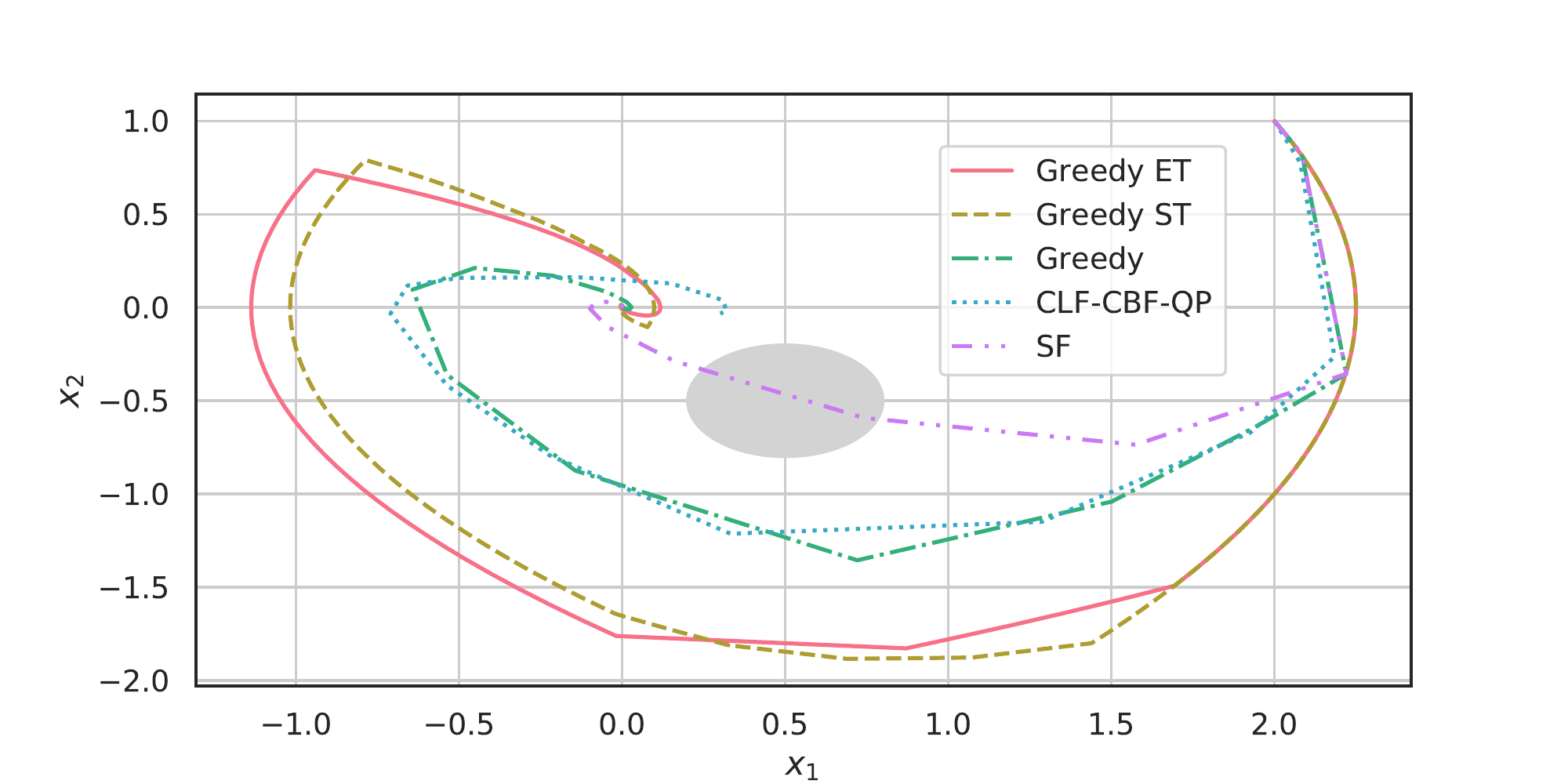}
\caption{Phase portrait: the grey region indicates the unsafe region, $h(x)<0$ } 
\label{fig:pp}\vspace{-.1in}
\end{figure}

\begin{figure}[tbh]\centering
 \includegraphics[width=.98\linewidth, viewport =20 0 520 260, clip]{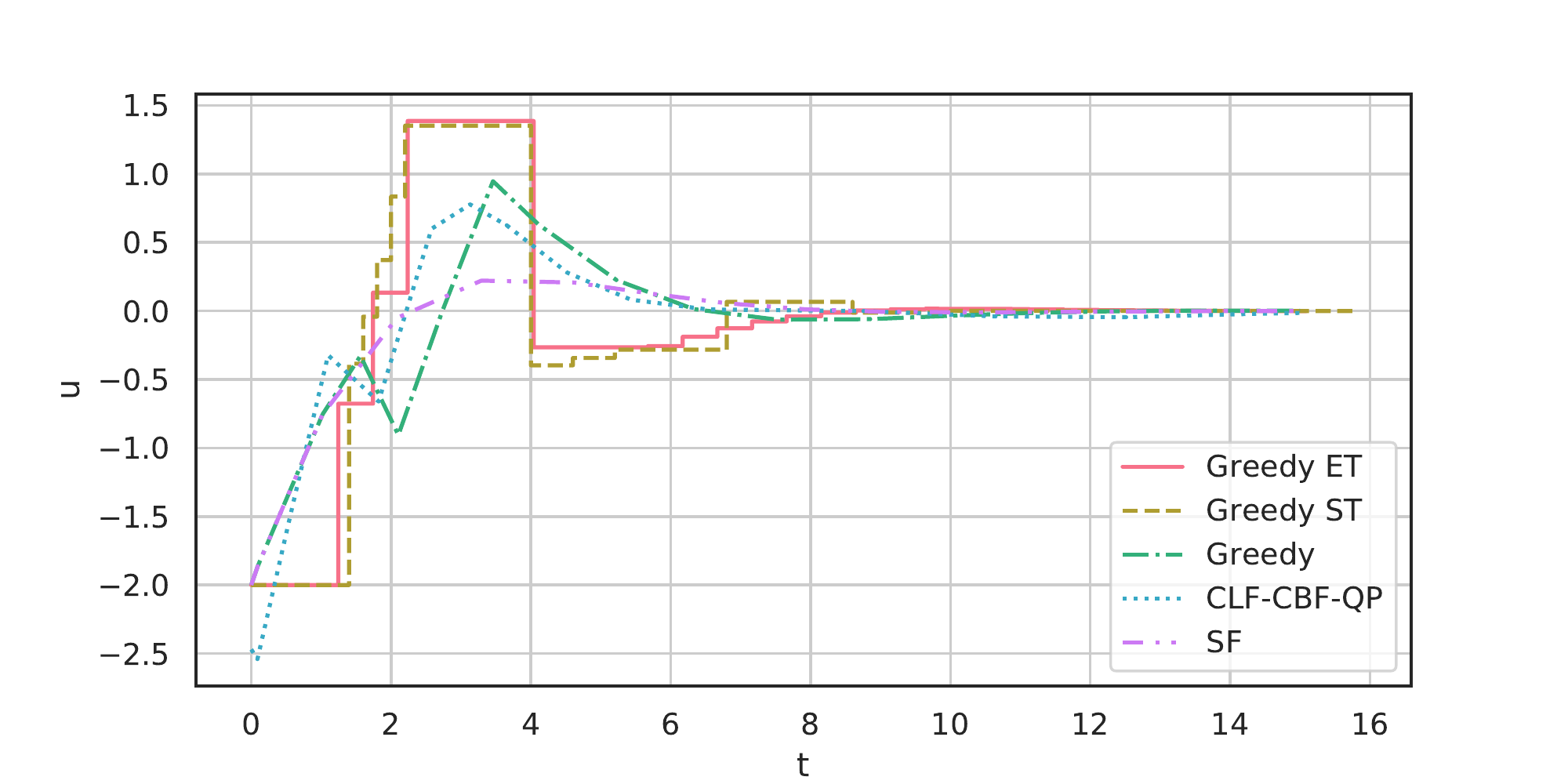}
\caption{Control inputs} 
\label{fig:u}\vspace{-.1in}
\end{figure}

\begin{figure}[tbh]\centering
 \includegraphics[width=.98\linewidth, viewport =20 0 520 260, clip]{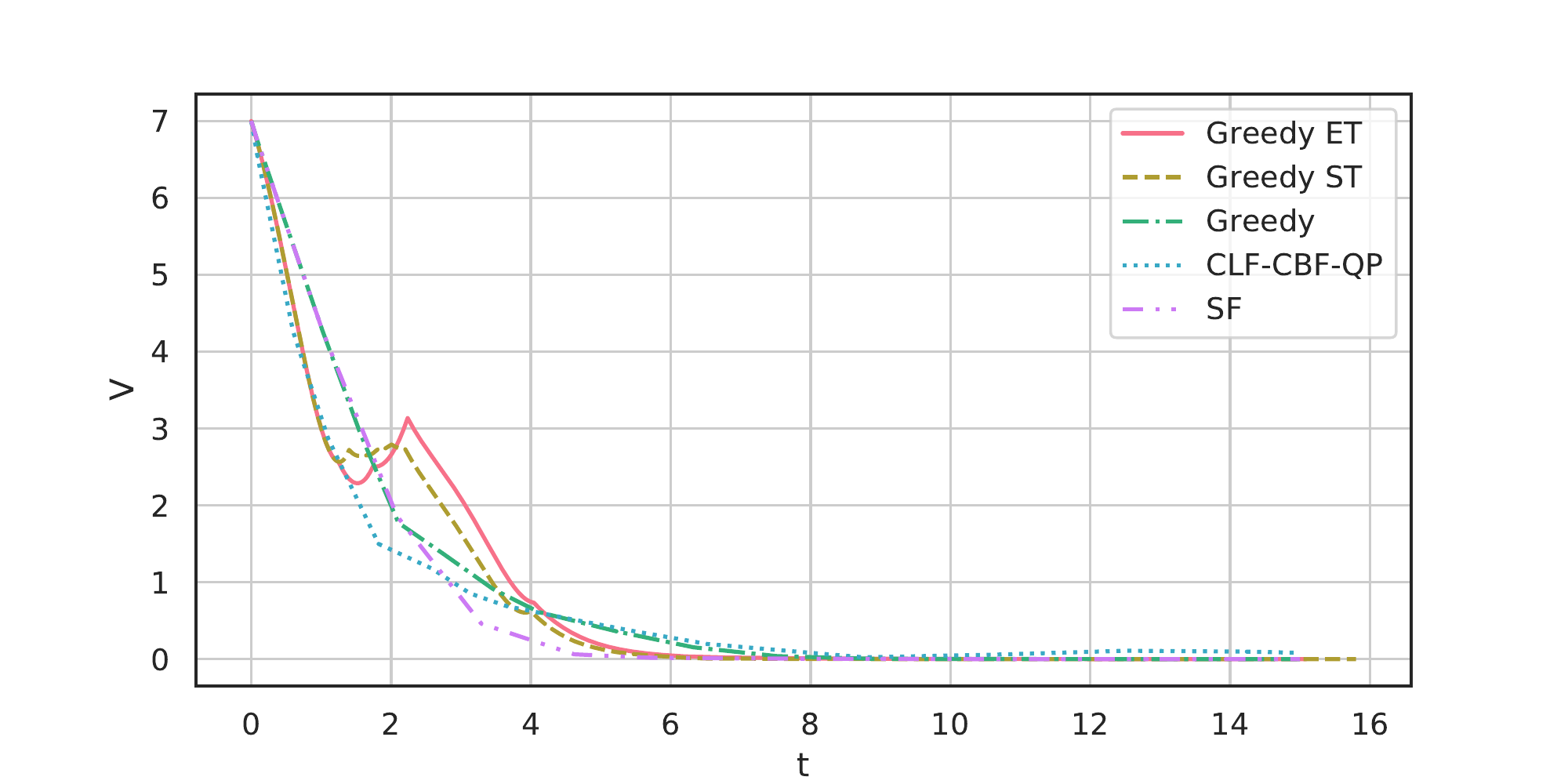}
\caption{Lyapunov function values, $V(x)=x_1^2+x_1x_2+x_2^2$} 
\label{fig:V}\vspace{-.1in}
\end{figure}

\begin{figure}[tbh]
\begin{minipage}[b]{0.95\hsize}
    \centering
 \includegraphics[width=.98\linewidth, viewport =20 0 520 260, clip]{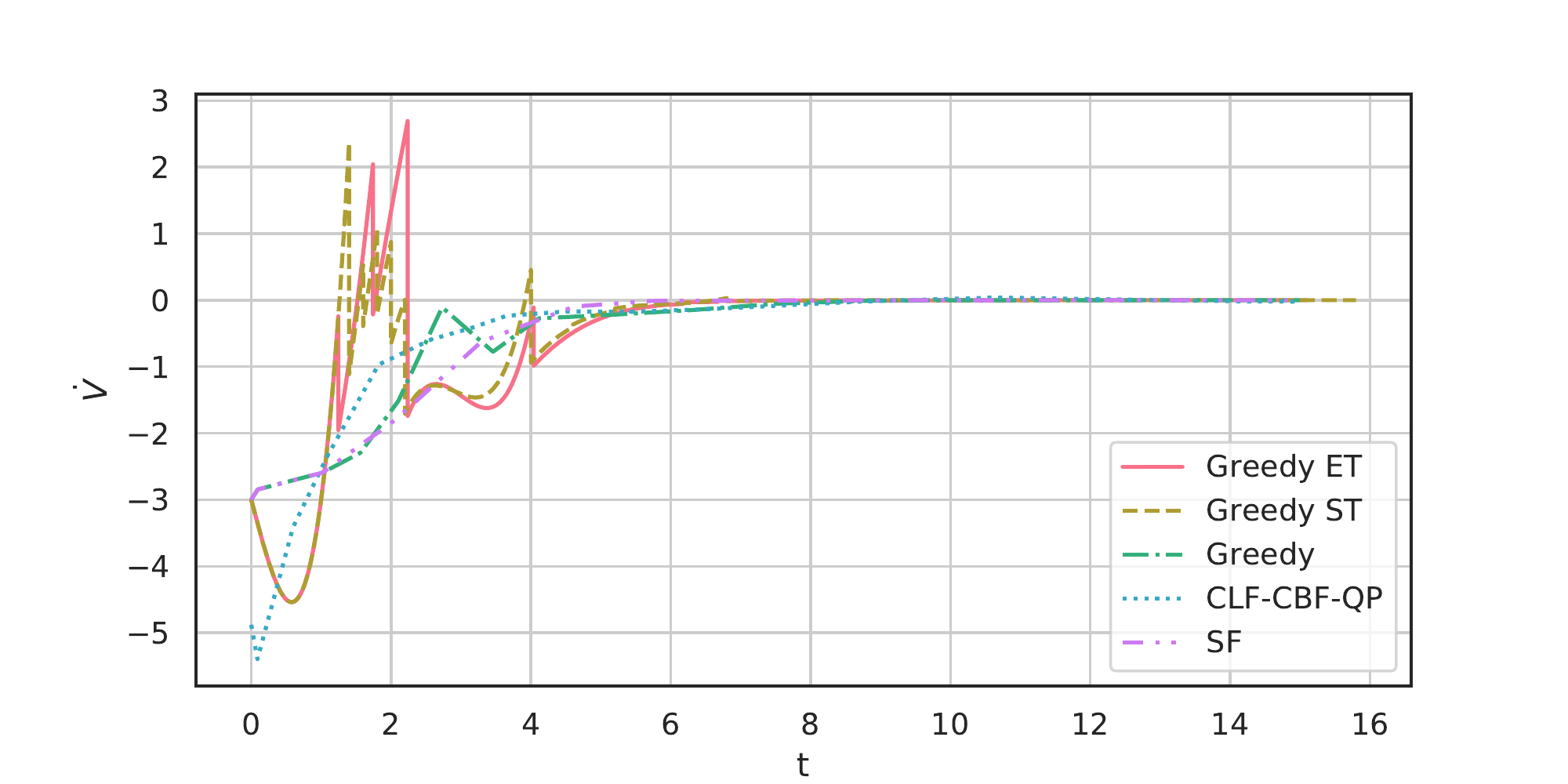}
    \subcaption{Stability constraint used for trigger condition, $L_fV(x)+L_gV(x)u$. The values are desired to be negative. }\label{fig:Vdot}
  \end{minipage}
  \begin{minipage}[b]{0.95\hsize}
    \centering
 \includegraphics[width=.98\linewidth, viewport =20 0 520 260, clip]{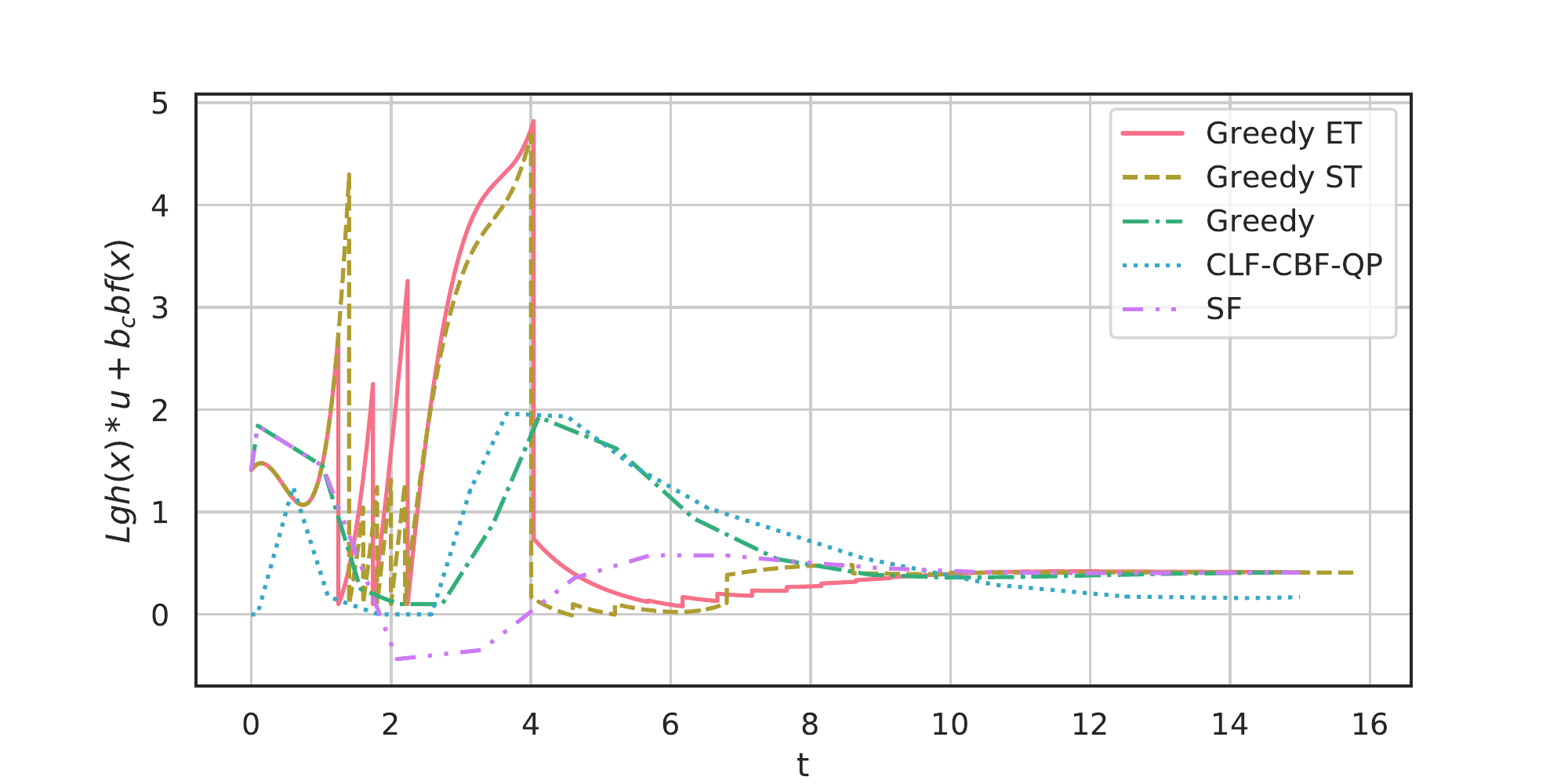}
    \subcaption{Safety constraint used for trigger condition,  $L_gh(x)u + b_{cbf}(x)$. The values are desired to be nonnegative. }\label{fig:hdot}
  \end{minipage}
  \caption{Trigger conditions}\label{fig:trig_cond}
\end{figure}

Figure \ref{fig:pp} shows the phase portraits for the five controllers. It is observed that
the trajectory with SF goes into the unsafe region, which motivates us to use the barrier function to remain in the safe region.
Also, the trajectories of Greedy and CLF-CBF-QP are close to each other, but the state of CLF-CBF-QP is further from the origin at the end of the simulation compared with the other four methods.
Moreover, both trajectories of Greedy ET and Greedy SF take longer paths compared with other methods.

Figure \ref{fig:u} shows the control input trajectories. It can be seen that Greedy ET and Greedy ST do not require frequent control updates.
In fact, the numbers of control updates for Greedy ET and Greedy ST were  24 and 26, respectively. 
With Greedy ST,  a smaller sampling time $\Delta$ forces the derivative of the Lyapunov function to be negative more strictly thus tends to increase the update frequency.

Figure \ref{fig:V} shows the Lyapunov function values. Both  Greedy ET and Greedy ST admit increases of the Lyapunov function values for certain periods. However, those trajectories approach zero quickly, while CLF-CBF-QP is still away from zero at the end of the simulation. Note that the original CLF-CBF-QP controller also allows increases of the Lyapunov function values to guarantee the feasibility of the optimization problem \cite{AmeCM19}.

Figure \ref{fig:trig_cond} shows the trajectories of values used for triggers.
Because Greedy ET and Greedy ST  compromised the stability constraints for the sake of reducing the update frequencies, Figure \ref{fig:Vdot} indicates the values of $L_fV(x)+L_gV(x)u$ go above zero sometimes.
On the other hand, the safety constraints are always satisfied by all the controllers except for the SF that ignored the existence of unsafe region Figure \ref{fig:hdot}. 
In Figure \ref{fig:hdot}, we also observe that the trajectory $L_gh(x)u + b_{cbf}(x)$ of CLF-CBF-QP stays near zero around time 2. 
Thus, we cannot implement an event-triggered strategy with CLF-CBF-QP because the trigger condition is violated or close to violate already at the time of the update.

\section{CONCLUSIONS}\label{sec:conc}
In this paper, we have presented a set of greedy approaches for synthesizing event-triggered and self-triggered controls with the control Lyapunov-Barrier function. Our proposed approach computes each control input to maximize the distance from the safety boundary, which is a departure from existing approaches to the control Lyapunov-Barrier function. By doing so, our approach ensures a positive lower bound on the minimum inter-execution time, while also maintaining the safety of the control system and reducing the frequency of control input updates and/or samplings. This is particularly beneficial in the context of networked control systems, where safety is of paramount concern. The effectiveness of our proposed approach has been illustrated through a numerical example, which highlights its potential for real-world implementation.


\bibliographystyle{IEEEtran}
\bibliography{IEEEabrv,myref}

\end{document}